\documentclass[12pt,letterpaper]{article}

\usepackage[margin=1in]{geometry}
\usepackage[T1]{fontenc}
\usepackage[utf8]{inputenc}
\usepackage{mathptmx}

\usepackage{amsmath,amssymb,amsthm}
\usepackage{mathtools}
\usepackage{bm}

\usepackage{booktabs}
\usepackage{graphicx}
\usepackage{float}
\usepackage{caption}
\usepackage{natbib}
\setcitestyle{aysep={}}

\usepackage{setspace}
\usepackage{lineno}
\usepackage{comment}
\usepackage[toc,page]{appendix}

\usepackage{booktabs}
\usepackage{threeparttable}
\usepackage{siunitx}

\usepackage{enumitem}
\usepackage[hidelinks]{hyperref}
\hypersetup{
    colorlinks=true,
    citecolor=blue,
    linkcolor=blue,
    urlcolor=blue  
}

\hypersetup{pdftitle={Behavioral CES},pdfauthor={Tingmingke Lu and Zhenyi Wang}}
\usepackage[protrusion=true,expansion=false]{microtype}
\newtheorem{proposition}{Proposition}
\newtheorem{lemma}{Lemma}

\newtheorem{assumption}{Assumption}

\newcommand{\dln}{\Delta\ln}
\DeclareMathOperator{\Cov}{Cov}
\DeclareMathOperator{\Var}{Var}
\DeclareMathOperator*{\plim}{plim}
\newcommand{\R}{\mathbb{R}}

\title{Bounded Attention and Attenuated Elasticities}

\author{Tingmingke Lu\thanks{WU (Vienna University of Economics and Business). Email: tingmingke.lu@wu.ac.at.} \and Zhenyi Wang\thanks{WU (Vienna University of Economics and Business). Email: zhenyi.wang@wu.ac.at.}}

\date{\today}

\begin{document}

\maketitle

\begin{abstract}
\noindent

We study how bounded attention affects the structural estimation of the elasticity of substitution. In a sparse-max model, equilibrium prices and expenditure shares are observationally equivalent to those of a rational market. Its elasticity equals the structural elasticity scaled by the attention weight. The supply-side orthogonality condition of the covariance-based estimator holds at this attenuated elasticity but fails at the truth. Absent outside information, the estimator converges to the attenuated elasticity. Bounded attention leads to an identification failure that finite-sample and weak-instrument corrections cannot resolve. Because attention cannot exceed one, the estimate provides a lower bound on the structural elasticity.

\bigskip
\noindent\textbf{Keywords.} elasticity of substitution, bounded attention, sparse max, identification, observational equivalence.

\noindent\textbf{JEL.} C13, C18, C50, D83.
\end{abstract}

\newpage
\section{Introduction}
\label{sec:intro}

Governments have begun to regulate how prices are perceived, not only what they are. Since July 2024, France has required supermarkets larger than 400 square meters to flag, for two months, any product whose package shrank without a corresponding reduction in its shelf price, on the grounds that shoppers typically fail to notice the change at the time of purchase.\footnote{Order of April 16, 2024, amended June 28, 2024, enforced by the Directorate General for Competition, Consumer Affairs and Fraud Control. For more details, see \href{https://entreprendre.service-public.gouv.fr/actualites/A17378}{https://entreprendre.service-public.gouv.fr/actualites/A17378.}} The economic rationale for such rules is that demand responds to the price consumers perceive, and perceived prices need not coincide with actual prices \citep{chetty2009salience, strulov2023more, d2021exposure}. Yet the purchases made by inattentive consumers are the raw input for structural demand estimation.

The workhorse constant elasticity of substitution (CES) demand system in quantitative trade and spatial economics uses those purchases to recover structural parameters. The elasticity at its core governs the measured gains from trade \citep{arkolakis2012new, costinot2014trade}, the welfare value of new varieties \citep{feenstra1994new, broda2006globalization}, the pass-through of tariffs \citep{amiti2019impact, fajgelbaum2020return}, and the measurement of aggregated cost of living \citep{redding2020measuring, hoste2025uncovering}. Standard estimation recovers this elasticity from prices and expenditure shares by treating the observed shares as rational CES responses to the listed prices, under the assumption that consumers fully attend to all the prices they face. But if consumers do not allocate full attention to all prices, what does the standard estimator recover? This paper investigates.

We embed CES demand in the sparse-max model of bounded attention \citep{gabaix2014sparsity}, where consumers respond to part of each price change and anchor the rest of their perceptions to a default price while facing a real budget constraint. From the perspective of an econometrician assuming full attention, we then employ the latest refinement of the \citet{feenstra1994new} covariance restrictions-based estimator by \citet{soderbery2015estimating} to ask what structural information can be recovered from market data.

We show that the equilibrium prices and expenditure shares of an inattentive market are observationally equivalent to those of a rational market whose elasticity is the structural one scaled by the attention weight. The equivalence survives structural estimation. The orthogonality condition that underpins the supply-side identification strategy holds for the attenuated elasticity but fails for the structural one. The estimator therefore returns the attenuated elasticity, which is a pseudo-true parameter in the sense of \citet{andrews2026true}. We derive these results in closed form for a representative consumer and confirm them in simulated markets of heterogeneous consumers.

Our results rely on the budget adjustment \citet{gabaix2014sparsity} builds into sparse max. The consumer equates marginal rates of substitution to perceived prices, but her spending at true prices exhausts the budget. Due to this asymmetry, demand responds to true prices with the structural elasticity scaled by the attention weight. The equivalent economy is itself a rational market, so the standard model is correctly specified for the data generated by attention friction. Every diagnostic is a function of prices and shares, and both markets yield the same distribution for any such function. Consequently, each test passes at the attenuated value, and nothing in the joint distribution of prices and shares reveals the friction. Because attention cannot exceed one, the estimated elasticity is a lower bound on the structural elasticity. Therefore, restrictions strictly weaker than the full attention that point identification imposes still yield informative bounds on the true elasticity.

This paper adapts the sparse-max framework of \citet{gabaix2014sparsity} to the log-linear CES demand system and addresses the structural econometrics of estimating the elasticity of substitution from market data. The covariance-based estimator studied in this paper was developed by \citet{feenstra1994new} and refined by \citet{broda2006globalization} and \citet{soderbery2015estimating}. \citet{mackay2025estimating} recently generalized the covariance-restriction principle to broader models of supply and demand under imperfect competition. We show that if the underlying consumer behavior that generates the population moments of the estimator is driven by bounded rather than full attention, the estimator systematically misidentifies the parameter.

Moreover, we show that the divergence between the true elasticity and the returned value persists in the population limit if consumers do not allocate full attention to all prices. The observational equivalence behind that persistence has a macroeconomic precedent in \citet{angeletos2021myopia}, who recast incomplete information as two behavioral distortions, myopia toward the future and anchoring to the past. We prove a static counterpart of that equivalence for structural demand estimation. Their distortions grow with the strength of the general equilibrium feedback. Our result runs through the supply-side identifying restriction, which holds at the attenuated value and fails at the truth, so the estimator treats bounded attention as less elastic demand. Our paper also connects to the tax salience literature, particularly \citet{chetty2009salience} and \citet{kroft2024salience}, who show how consumer inattention scales demand responses linearly. We contribute to this literature by linking this behavioral scaling to the asymptotic failure of covariance-based demand estimators.

\section{Analysis}
\subsection{The Sparse-Max CES System}
\label{sec:setup}

We work within a single product nest. The nest contains varieties indexed by \(v\), with prices \(p_v > 0\), true taste weights \(\phi_v > 0\), and a common elasticity of substitution \(\sigma^r > 1\), where the superscript \(r\) marks it as the structural parameter under rationality.\footnote{We suppress the nest subscript because every result holds nest by nest.} The consumer has CES preferences. She chooses quantities \(c_v\) to maximize \( U = \big(\sum_v \phi_v^{1/\sigma^r} c_v^{(\sigma^r-1)/\sigma^r}\big)^{\sigma^r/(\sigma^r-1)} \) subject to \( \sum_v p_v c_v = E \), in which \(E\) is the nest expenditure.

\paragraph{The rational benchmark.}
Under full attention, the Marshallian expenditure share of variety \(v\) is
\begin{equation}\label{eq:rational_share}
  s_v \;=\; \frac{\phi_v\, p_v^{\,1-\sigma^r}}{\sum_j \phi_j\, p_j^{\,1-\sigma^r}},
\end{equation}
which adopts the multiplicative taste convention of \citet{feenstra1994new} so that \(\phi_v\) enters the share directly.

To investigate how a consumer with bounded attention responds to price fluctuations when anchoring perceptions on a default baseline, we embed CES demand within a sparse-max framework. This setting allows the consumer to maintain a sparse perception of the price vector. As in \citet{gabaix2014sparsity}, the consumer anchors on a default price and adjusts partially toward the truth. We specify the adjustment in logs,
\begin{equation}\label{eq:perceived}
  \ln p_v^s \;=\; m\,\ln p_v + (1-m)\,\ln \bar{p},
  \qquad m \in (0,1],
\end{equation}
in which \(p_v^s\) is the perceived price with the superscript \(s\) marking it as a parameter under sparsity, \(\bar{p} > 0\) is a nest-level anchor price, and \(m\) is the attention weight. Full attention at \(m=1\) yields \(p_v^s = p_v\) and restores rationality, while lower \(m\) pulls perceived prices toward the anchor. We rule out \(m = 0\), under which every perceived price equals the anchor and demand does not respond to true prices.

In our model, a consumer who forms a sparse perception of the log price applies the attention model to the variable on which the demand system operates as CES demand and the consumer's first-order conditions are linear in log prices. We consider \(\bar{p}\) as fixed within the estimation window and treat it as the sparse consumer's default price in the sense of \citet{gabaix2014sparsity}, not as a statistical forecast, so that all variation in perceived prices comes from true prices and attention.

The perception ratio
\begin{equation}\label{eq:R}
  R_v \;\equiv\; \frac{p_v^s}{p_v} \;=\; \left(\frac{\bar{p}}{p_v}\right)^{1-m}
\end{equation}
exceeds one for varieties priced below the anchor and falls short of one for those priced above it, so a consumer who pays little attention overstates the price of a cheap variety and understates the price of an expensive one. In the sparse-max framework, \(m\) is not free but is chosen to balance the utility cost of misperceiving a price against a cognitive penalty for attention, and the optimal weight rises with the price volatility of the variety, its expenditure share, and the elasticity \citep{gabaix2014sparsity}.\footnote{We treat \(m\) as a structural parameter and ask what the data reveal about it. We do not use the optimality calculation in any identification result below. We return to its optimal value in the discussion.}

\paragraph{Sparse demand and the budget adjustment.}
Following \citet{gabaix2014sparsity}, we adopt the sparse max operator with budget adjustment, which allows a sparse consumer to set marginal rates of substitution equal to perceived price ratios while exhausting her true budget. The operator solves the rational optimization at perceived prices and then rescales the budget so that true spending equals \(E\). In particular, quantities at perceived prices are \(c_v^s = A\,\phi_v\,(p_v^s)^{-\sigma^r}\), in which \(A\) is a common scale. The true budget $\sum_j p_j c^s_j = E$ pins $A$.

Carrying out the rescaling, the perceived-price aggregator cancels and the sparse expenditure share is
\begin{equation}\label{eq:sparse_share}
  s_v^s \;=\; \frac{p_v\,\phi_v\,(p_v^s)^{-\sigma^r}}{W},
  \qquad
  W \;\equiv\; \sum_j p_j\,\phi_j\,(p_j^s)^{-\sigma^r}.
\end{equation}
The denominator exposes the asymmetry of sparse demand, in which perceived prices, through \((p_j^s)^{-\sigma^r}\), govern substitution across varieties, while true prices, through \(p_j\), enforce budget balance.

\paragraph{Observational equivalence.}
Define the effective taste weight
\begin{equation}\label{eq:eff_taste}
  \tilde\phi_v \;\equiv\; \phi_v\left(\frac{p_v^s}{p_v}\right)^{-\sigma^r} \;=\; \phi_v\,R_v^{-\sigma^r}.
\end{equation}
Substituting \eqref{eq:eff_taste} into the rational share \eqref{eq:rational_share}, with \(\tilde{\phi}\) in place of \(\phi\), reproduces the sparse share, which allows us to state the following result. (All proofs are relegated to Appendix \ref{app:proofs}.)

\begin{lemma}[Observational equivalence]\label{lem:obs_equiv}
The bounded-attention expenditure shares \eqref{eq:sparse_share} are identical to rational CES shares \eqref{eq:rational_share} in which the true taste weight \(\phi_v\) is replaced by the effective taste weight \(\tilde\phi_v = \phi_v R_v^{-\sigma^r}\).
An econometrician who observes \((s_v^s, p_v)\) and maintains rational CES recovers \(\tilde\phi_v\), not \(\phi_v\).
\end{lemma}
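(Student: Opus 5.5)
The plan is a direct algebraic verification in two parts, one for each sentence of the lemma.

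\emph{First part: the shares coincide.} I would first derive the sparse share \eqref{eq:sparse_share} from the budget-adjusted sparse max, so the equivalence rests on an explicit computation. At perceived prices the consumer chooses \(c_v^s = A\,\phi_v\,(p_v^s)^{-\sigma^r}\). The true budget \(\sum_j p_j c_j^s = E\) gives \(A = E/W\), with \(W\) as in \eqref{eq:sparse_share}. The expenditure share is then \(p_v c_v^s/E = p_v\phi_v(p_v^s)^{-\sigma^r}/W\). Next I write \(p_v^s = R_v p_v\), using the definition \eqref{eq:R}. The numerator becomes
\[
p_v\,\phi_v\,R_v^{-\sigma^r}p_v^{-\sigma^r} = \tilde\phi_v\,p_v^{1-\sigma^r},
\]
and likewise \(W = \sum_j \tilde\phi_j\,p_j^{1-\sigma^r}\). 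Dividing one by the other gives exactly \eqref{eq:rational_share} with \(\phi\) replaced by \(\tilde\phi\), where \(\tilde\phi\) is defined in \eqref{eq:eff_taste}. Since \(R_v>0\) and \(\phi_v>0\), we have \(\tilde\phi_v>0\), so \(\tilde\phi\) is an admissible taste vector and the rational model is correctly specified.

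\emph{Second part: what the econometrician recovers.} Under rational CES, shares identify taste weights only up to a common scale, because \eqref{eq:rational_share} is homogeneous of degree zero in \(\phi\). I would therefore impose a normalization, for example a fixed reference variety or \(\sum_v\phi_v=1\). The elasticity is held at the value the econometrician maintains, \(\sigma^r\) as in the statement. Given that value, inverting \eqref{eq:rational_share} gives \(\phi_v \propto s_v p_v^{\sigma^r-1}\). Substituting the observed \(s_v^s\) from the first part returns \(\tilde\phi_v\), up to the same normalization. Any other taste vector \(\phi'\) that fits the observed shares must satisfy \(\phi'_v/\phi'_j = \tilde\phi_v/\tilde\phi_j\) for every pair. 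Hence the recovered object is \(\tilde\phi\), and it differs from \(\phi\) whenever \(m<1\) and prices vary across varieties, because then \(R_v\) is not constant across \(v\).

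\emph{Main obstacle.} The algebra is routine. The delicate point is stating the second sentence precisely: it concerns identification of tastes \emph{given} the elasticity, modulo scale. The statement should not be read as a claim about joint recovery of \((\sigma,\phi)\). That joint question is left to later results on the estimator. In the proof I would flag this explicitly by writing \(\tilde\phi_v\) in the form \(\phi_v\,(\bar p/p_v)^{-(1-m)\sigma^r}\). This shows that \(\tilde\phi\) depends on \(p_v\), which foreshadows the later attenuation result.
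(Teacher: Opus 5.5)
Your proposal is correct and follows essentially the paper's route. Both proofs rest on the single substitution \(p_v^s = R_v p_v\), which the paper phrases as \(p_v^{\sigma^r}\cdot p_v^{1-\sigma^r} = p_v\); you run it from the sparse share toward the rational form and make the budget step \(A = E/W\) explicit. Your extra care in the second part sharpens the paper's looser wording and its condition \(p_v \neq \bar p\): you fix the maintained elasticity at \(\sigma^r\) and identify tastes only up to scale, so \(\tilde\phi \neq \phi\) exactly when \(m<1\) and prices differ across varieties. That caveat is substantive, because inverting the shares at \(\sigma^r m\) instead of \(\sigma^r\) returns \(\phi\) up to scale.
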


The equivalence here carries its classical meaning, as in \citet{koopmans1949identification} and \citet{rothenberg1971identification}. There exists a rational economy that generates the same observables, so no procedure that uses only those observables can tell the two apart. The wedge \(R_v^{-\sigma^r}\) is the entire content of bounded attention as seen from the data. It is not a measurement error. It is a deterministic function of the price, and through \eqref{eq:R}, it moves whenever prices move.

Because of this behavioral wedge, \(R_v^{-\sigma^r}\), working on the demand side alone does not recover the true elasticity of substitution \(\sigma_r\). To show that, we write \(\dln \) for the change in the log over time and denote \( \dln R_{vt} \coloneqq \ln R_{v,t} - \ln R_{v,t-1}\). We impose a fixed anchor and a homogeneous attention within the nest, so the perception ratio satisfies \[\dln R_{vt}  = -(1-m)(\ln p_{v,t} - \ln p_{v,t-1}) = -(1-m)\dln p_{vt}. \] Taking logs of the observationally equivalent share, first-differencing, and absorbing the nest-time aggregate into a period effect \(\eta_t\) gives the demand equation
\begin{equation}\label{eq:demand_exact}
  \dln s_{vt}^s
  \;=\; \eta_t + \dln\phi_{vt} - \sigma^r\,\dln R_{vt} + (1-\sigma^r)\,\dln p_{vt}
  \;=\; \eta_t + \dln\phi_{vt} + (1-\sigma^r m)\,\dln p_{vt}.
\end{equation}

Equation \eqref{eq:demand_exact} is an exact identity that is linear in \(\dln p_{vt}\) because the log anchor makes \(\ln R_v\) linear in \(\ln p_v\).\footnote{As long as the anchor is common to all the varieties in the nest, its change enters \(\dln R_{vt}\) in the same way for every variety. Therefore, it is absorbed by the period effect \(\eta_t\) and cancels out from all changes taken relative to a reference variety.} The price coefficient is \(1-\sigma^r m\), and the demand residual is the true taste shock \(\dln\phi_{vt}\), which carries a unit coefficient.\footnote{Note that attention does not contaminate the demand equation residual.} An econometrician who estimates this equation, by any method that handles the simultaneity between prices and shares, recovers the coefficient \(1-\sigma^r m\) and reports the apparent elasticity
\begin{equation}\label{eq:apparent}
  \hat{\sigma} \;=\; \sigma^r m,
\end{equation}
which is the rational structural elasticity attenuated by the attention weight \(m\). Equation \eqref{eq:apparent} indicates that the behavioral demand response equals the rational response scaled by the attention the consumer places on the price. The log perception rule makes this scaling exact, so the scaling factor equals the attention weight at every price and not only in a neighborhood of the anchor.\footnote{The log rule and the additive rule \(p_v^s = m p_v + (1-m)\bar p\) agree to first order at the anchor, where both give pass-through \(m\) and local elasticity \(\sigma^r m\). Away from the anchor, the additive rule passes through a share \(m p_v / (m p_v + (1-m)\bar p)\) of each log price change, and this share rises with the price. The local elasticity is \(\sigma^r\) times a share that varies across varieties. Under the additive rule, the estimator returns a value strictly below the structural elasticity, now a covariance-weighted average of these local elasticities rather than the constant \(\sigma^r m\). Appendix \ref{app:level} contains the full characterization.}

Equation \eqref{eq:apparent} is one equation in two unknowns. Any preference-attention pair on the rectangular hyperbola
\begin{equation}\label{eq:identset}
  \mathcal{M} \;=\; \big\{(\sigma^r, m) \in \R_{>1}\times(0,1] : \sigma^r m = \hat{\sigma}\big\}
\end{equation}
generates the same apparent elasticity. Therefore, the structural elasticity is identified with the identified set \(\mathcal{M}\).

Accounting for the joint equilibrium of supply and demand does not remove the friction. Under the standard identifying restrictions, an econometrician who estimates the elasticity from prices and shares still converges to the attenuated elasticity in \eqref{eq:apparent}, not the structural elasticity. The next section proves this.

\subsection{Identification}
\label{sec:identification}

We complete our model by specifying a structural supply side, mapping our equilibrium framework directly to the setup of \citet{feenstra1994new}. This structural alignment ensures that our behavioral framework satisfies the identifying assumptions required to implement his identification strategy and empirical estimator.

\paragraph{The market.}
Each variety is produced by a monopolistically competitive firm that prices at a constant markup over a marginal cost that rises with output,
\begin{equation}\label{eq:supply}
  p_{vt} \;=\; \mu\,\nu_{vt}\,x_{vt}^{\,\omega},
\end{equation}
where \(x_{vt}\) is output, \(\nu_{vt} > 0\) is a multiplicative cost shock, \(\omega \ge 0\) is the inverse supply elasticity, and \(\mu\) is the markup.\footnote{A firm that prices against the perceived elasticity charges a different markup than one that prices against the rational elasticity, but with \(\sigma^r\) and \(m\) being constant, either markup is a constant, and any constant cancels from the differenced equations below. Therefore, the estimating moments are the same in the two cases. Nothing below depends on the choice.}

Following \citet{feenstra1994new}, we remove the period effect \(\eta_t\) in \eqref{eq:demand_exact} by differencing each variety against a reference variety \(k\) in the same nest.\footnote{From here on, every \(\dln\) represents the double difference over time and relative to the reference variety.} Doing so yields 
\begin{equation}\label{eq:demand_dbdiff}
  \dln s_{vt}^s
  \;=\; \dln\phi_{vt} + (1-\sigma^r m)\,\dln p_{vt}.
\end{equation}
Next, solving the demand equation \eqref{eq:demand_dbdiff} together with the pricing equation \eqref{eq:supply} yields the equilibrium price change
\begin{equation}\label{eq:reduced}
  \dln p_{vt} \;=\; \frac{\omega\,\dln\phi_{vt} + \dln\nu_{vt}}{D},
  \qquad
  D \;\equiv\; 1 + \omega\,\sigma^r m \;>\; 0.
\end{equation}
The reduced-form relation in Equation \eqref{eq:reduced} establishes that equilibrium prices respond to both taste and cost changes. This joint determination implies non-zero covariances between the log price variation \(\dln p_{vt}\) and the two structural errors, the demand residual \(\dln\phi_{vt}\) and the supply-side cost shock \(\dln\nu_{vt}\).

\paragraph{The identifying moment.} The \citet{feenstra1994new} strategy identifies the elasticity as the value at which the demand residual implied by rational CES is orthogonal to the cost shock. 

\begin{assumption}[Independent structural shocks]\label{ass:orth}
The structural taste and cost shocks are uncorrelated, \(\Cov(\dln\phi_{vt}, \dln\nu_{vt}) = 0\).
\end{assumption}

The \citet{feenstra1994new} estimator chooses \(\sigma^r\) to make the implied taste residual orthogonal to the cost side. For a candidate elasticity \(\sigma\), the estimator inverts the rational demand equation in relative terms to obtain the taste residual
\begin{equation}\label{eq:recovered_resid}
  \dln\tilde\phi_{vt}(\sigma) \;=\; \dln s_{vt} + (\sigma-1)\,\dln p_{vt},
\end{equation}
with all changes taken relative to a reference variety \(k\) in the nest. At the true \(\sigma^r\), this residual is the true taste shock \(\dln\phi_{vt}\). The estimator then checks whether \(\dln\tilde{\phi}_{vt}(\sigma)\) is uncorrelated with the supply shock.

Within the sparse-max CES framework, however, we substitute the empirical market share changes \(\dln s_{vt}^s\), which are the observed data analogs to the latent shares \(\dln s_{vt}\), along with the double-differenced demand equation \eqref{eq:demand_dbdiff} and the reduced-form equilibrium price relation \eqref{eq:reduced} into the system. Invoking the structural orthogonality restrictions in Assumption \ref{ass:orth} then yields the analytical, closed-form moment condition expression that holds variety by variety,
\begin{equation}\label{eq:moment}
  \Cov\!\big(\dln\tilde\phi_{vt}(\sigma),\, \dln\nu_{vt}\big)
  \;=\; (\sigma - \sigma^r m)\,\frac{\Var(\dln\nu_{vt})}{D}.
\end{equation}

When evaluated at the attenuated value (i.e., \(\sigma = \sigma^r m\)), the moment vanishes. It is strictly positive at the true elasticity and equal to \(\sigma^r(1-m)\Var(\dln\nu_{vt})/D\) whenever \(m<1\). The orthogonality condition that the \citet{feenstra1994new} estimator drives to zero is therefore satisfied at the attenuated value and violated at the truth.\footnote{At \(\sigma^r m\), the implied residual in \eqref{eq:recovered_resid} reduces to the true taste shock, which is independent of the cost shock. At \(\sigma^r\), it equals the effective-taste change of Lemma \ref{lem:obs_equiv}, because \(\dln\tilde\phi_{vt}(\sigma^r) = \dln\phi_{vt} + \sigma^r(1-m)\dln p_{vt} = \dln\tilde\phi_{vt}\), which is why the notation reuses the tilde.}

Two requirements stand between these moment-level facts and an estimator-level result. The first is that the equivalent market must lie in the region the estimator searches. For attention low enough that \(\sigma^r m \le 1\), the equivalent market leaves the relevant CES region, the inversion from the estimated slopes to the elasticity has no root above one, and the estimator has no consistent value to find, so the attenuated product must exceed one. 

The second concerns the sampling frame. The identified set \(\mathcal{M}\) in expression \eqref{eq:identset} is a property of the population distribution of \(\{\dln p_{vt}, \dln s_{vt}^s\}\), so inconsistency is a statement about unlimited data, and stating it requires the estimator's own asymptotic frame formed by a long panel over a fixed set of varieties. Along such a long panel, the finite-sample and weak-instrument biases of a correctly specified estimator vanish, while the moment violation at the truth, \(\sigma^r(1-m)\Var(\dln\nu_{vt})/D\), is a population quantity that a long panel cannot fix. 

We collect these requirements, Feenstra's rank condition, and an explicit consistency condition in the following assumption.

\begin{assumption}[Regularity for the estimator]\label{ass:reg}
(i) \(\omega > 0\) and \(\sigma^r m > 1\), so the equivalent rational market lies in the relevant CES region. (ii) The ratio of the relative taste-shock variance to the relative cost-shock variance differs across at least two non-reference varieties, so the nest contains at least three varieties, which is Feenstra's rank condition. (iii) The panel grows long, \(T \to \infty\), with the variety set fixed. (iv) In a rational CES market that satisfies (i) to (iii) and is correctly specified, the Feenstra and Broda-Weinstein estimator, in its product-moment and limited-information maximum-likelihood forms, is consistent for the elasticity and the inverse supply elasticity.
\end{assumption}

We are now in a position to state our central result.

\begin{proposition}[Complete observational equivalence]\label{prop:main}
Under log perception with a fixed anchor and Assumption \ref{ass:orth}, the equilibrium data \(\{\dln p_{vt}, \dln s_{vt}^s\}\) are observationally equivalent to those of a fully rational market with elasticity \(\sigma^r m\), supply elasticity \(\omega\), and taste shock \(\dln\phi_{vt}\).
Consequently,
\begin{enumerate}[label=(\roman*), leftmargin=2.2em, itemsep=2pt]
  \item the identifying moment \eqref{eq:moment} equals zero at \(\sigma=\sigma^r m\) and is strictly positive, equal to \(\sigma^r(1-m)\Var(\dln\nu_{vt})/D\), at the true elasticity \(\sigma=\sigma^r\);
  \item under Assumption \ref{ass:reg}, the Feenstra and Broda-Weinstein estimator converges to \(\hat{\sigma} = \sigma^r m\) and \(\hat\omega = \omega\).
\end{enumerate}
The supply-side strategy returns the attenuated elasticity and does not recover \(\sigma^r\).
\end{proposition}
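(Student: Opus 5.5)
The plan is to build the equivalent rational market explicitly and show that its equilibrium equations coincide with those of the sparse market. Consider a rational CES market with elasticity \(\tilde\sigma \equiv \sigma^r m\), the same supply block \eqref{eq:supply} with inverse supply elasticity \(\omega\) and cost shocks \(\dln\nu_{vt}\), and taste shocks \(\dln\phi_{vt}\).

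Its double-differenced demand equation is \(\dln s_{vt} = \dln\phi_{vt} + (1-\tilde\sigma)\dln p_{vt}\). This is literally \eqref{eq:demand_dbdiff}. Lemma \ref{lem:obs_equiv} and the derivation of \eqref{eq:demand_exact} justify it, since the log anchor makes \(\dln R_{vt} = -(1-m)\dln p_{vt}\) exact.

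For supply, I will use the footnote's point that the markup is constant under either pricing convention. Hence \(\dln p_{vt} = \dln\nu_{vt} + \omega\,\dln x_{vt}\) in double differences. Next, write \(\dln x_{vt} = \dln s_{vt} - \dln p_{vt}\), because quantity equals expenditure share times nest expenditure over price and nest expenditure cancels against the reference variety. Solving the two linear equations gives \eqref{eq:reduced} with \(D = 1+\omega\tilde\sigma\) in both markets. So \((\dln p_{vt},\dln s_{vt}^s)\) is the same linear map of \((\dln\phi_{vt},\dln\nu_{vt})\) in both markets. With identical shock distributions, the joint law of the observables is identical, which is the observational-equivalence claim.

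For (i), substitute \eqref{eq:demand_dbdiff} into \eqref{eq:recovered_resid}. This gives \(\dln\tilde\phi_{vt}(\sigma) = \dln\phi_{vt} + (\sigma-\sigma^r m)\dln p_{vt}\). Take the covariance with \(\dln\nu_{vt}\) and use \eqref{eq:reduced} together with Assumption \ref{ass:orth}. We get \(\Cov(\dln\phi,\dln\nu)=0\) and \(\Cov(\dln p,\dln\nu) = \Var(\dln\nu)/D\), which yields \eqref{eq:moment}. Evaluating it at \(\sigma=\sigma^r m\) and at \(\sigma=\sigma^r\) gives the two stated values. Strict positivity follows from \(m<1\), \(D>0\) and \(\Var(\dln\nu_{vt})>0\); the last is implicit in the rank condition.

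For (ii), I will not re-derive the estimator. The Feenstra--Broda--Weinstein estimator, in both its product-moment and LIML forms, is a measurable function of the sample \(\{\dln p_{vt},\dln s_{vt}^s\}\). Its probability limit therefore depends only on the law of the data, which by the first step equals that of the equivalent rational market. I then need to check that this rational market satisfies Assumption \ref{ass:reg}(i)--(iii). Part (i) gives \(\tilde\sigma>1\) and \(\omega>0\), and parts (ii)--(iii) concern the shocks and the sampling frame, which the two markets share. Assumption \ref{ass:reg}(iv) then delivers \(\hat\sigma \to \sigma^r m\) and \(\hat\omega\to\omega\). The final sentence of the proposition follows from \(\sigma^r m \neq \sigma^r\) when \(m<1\).

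I expect the main obstacle to be the supply-side matching in the first step. The output definition \(x_{vt}\) and the markup must not introduce an \(m\)-dependent term that differs across varieties. Two things need checking. First, quantities in the sparse market are actual purchases \(c_v^s\), so \(x = s^s E/p\) still holds; this follows from the budget adjustment in \eqref{eq:sparse_share}. Second, the common anchor and any constant markup drop out after reference-variety differencing. If the firm's markup used the perceived elasticity, it would still be a common constant, so the argument goes through.
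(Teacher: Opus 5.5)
Your proposal is correct and follows the paper's proof essentially step for step. You build the rational market with elasticity \(\sigma^r m\), show that its demand equation and reduced form coincide with \eqref{eq:demand_dbdiff} and \eqref{eq:reduced}, obtain (i) from the covariance computation, and obtain (ii) because the estimator is a functional of the data's law together with Assumption~\ref{ass:reg}(iv); the paper's added remark that Feenstra's quadratic in \(\sigma-1\) has exactly one root above one when \(\omega>0\) is already covered by that direct appeal to Assumption~\ref{ass:reg}(iv).
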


Appendix \ref{app:proofs} proves this by constructing an explicit rational economy that reproduces the joint distribution of prices and shares period by period, so the estimator, a function of that distribution, returns the parameters of the rational economy.

While recent refinements to Feenstra's framework explicitly target finite-sample fragility and weak-instrument issues \citep[e.g.,][]{soderbery2015estimating,andrews2019weak}, Proposition \ref{prop:main} demonstrates that behavioral attenuation operates through an entirely distinct channel. Whenever the attention weight deviates from unity, this attenuation emerges as a property of the population moment rather than an attribute of the sampling distribution. Because the population estimand is strictly pinned down as \(\sigma^r m\), the structural wedge \(\sigma^r(1-m)\) persists irrespective of panel length or first-stage instrument strength. Consequently, this bias cannot be mitigated by collecting more periods or securing stronger instruments. The discrepancy is a fundamental identification failure driven by model misspecification.

Equation \eqref{eq:moment} is the population value of the covariance restriction that vanishes at \(\sigma^r m\). An estimator that drives the sample covariance of the recovered demand and supply residuals to zero, therefore, converges to the same attenuated product.

Because point identification of \(\sigma^r\) relies exclusively on the maintained assumption of full attention, the point estimates are only as credible as that assumption itself. Evaluating policy by feeding the \(\hat{\sigma}\) estimate into standard welfare formulas yields distorted measures. Resolving this distortion requires isolating \(\sigma^r\) from the composite term in Equation \eqref{eq:apparent}, a decomposition that observational market data alone cannot achieve.

\paragraph{What the estimate still delivers.} The set \(\mathcal{M}\) characterizes the informative bounds of the model. Due to the observational equivalence formalized in Proposition \ref{prop:main}, every parameter vector \((\sigma^r, m) \in \mathcal{M}\) induces the same joint distribution of prices and shares as a fully attentive market operating under elasticity \(\hat{\sigma}\). This mutual indistinguishability implies that the identified set for \((\sigma^r, m)\) is the intersection of \(\mathcal{M}\) with the parameter support dictated by Assumption \ref{ass:reg}(i).

Because the attention parameter is naturally bounded above by unity ($m \le 1$), the recovered apparent elasticity establishes a lower bound for the true parameter, dictating that \(\sigma^r \ge \hat{\sigma}\). Any economically defensible lower threshold on attention, \(m \ge m_0\), conversely caps the identified set from above, mapping out the structural interval \(\sigma^r \in [\hat{\sigma},\, \hat{\sigma}/m_0]\). Empirical candidates for this attention floor can be directly informed by existing benchmarks in the behavioral and salience literature \citep[e.g.,][]{chetty2009salience, dellavigna2009psychology}. This bounding strategy that constructs set identification from behavioral restrictions strictly weaker than those required for dogmatic point identification falls directly within the classical sensitivity and partial-identification traditions of \citet{leamer1981demand}, \citet{manski2003partial}, and \citet{mackay2025estimating}. In the next section, we confirm the analytical results by running the actual estimator on simulated markets.

\section{Quantitative Illustration}
\label{sec:quant}

In our data generating process, the attention specification follows the sparse-attention formulation of \citet{gabaix2014sparsity}, applied through the anchored log perception rule of Equation \eqref{eq:perceived}. The structural elasticity is \(\sigma^r = 15\) and the focal attention weight is \(m = 0.5\), which makes the attenuated target at \(\sigma^r m = 7.5\). Therefore, the attenuated target is far enough from the truth that we won't mistake the gap for noise and the requirement \(\sigma^r m > 1\) of Assumption \ref{ass:reg}(i) holds in each simulation.

We apply the Feenstra product-moment regression with the limited-information maximum likelihood and Fuller refinements that \citet{soderbery2015estimating} recommends. The inverse supply elasticity \(\omega = 1.5\) gives an upward-sloping supply curve. The shock dispersions are drawn uniformly per variety from the given ranges, so the taste-to-cost variance ratio differs across varieties and the rank condition of Assumption \ref{ass:reg}(ii) holds by construction with fifteen varieties against the required three. At these dispersions, the panels produce first-stage strength in the single digits at conventional panel lengths and above the rule-of-thumb threshold only for the longest panels.

We design replications that build balanced panels in five steps. First, we draw one standard deviation for the taste shock and one for the cost shock of each variety from the given ranges, and both stay fixed throughout. Second, we draw the taste shocks \(\dln\phi_{vt}\) and the cost shocks \(\dln\nu_{vt}\) as normal draws with mean zero and those standard deviations, independently of each other, across varieties, and over time.
Independent draws leave the two shocks uncorrelated, so Assumption \ref{ass:orth} holds by construction. Then, we form the equilibrium price changes from the reduced form \eqref{eq:reduced}.

Next, because the shocks are formed in changes while expenditure shares are statements about levels, we turn changes into levels. We let baseline log prices form an evenly spaced grid across \(\ln\bar p \pm 0.35\), so varieties start on both sides of the anchor, and we let baseline tastes equal one. Each later period adds the accumulated price changes and the accumulated taste shocks to these baselines, so both levels drift with no pull back toward the baseline. 

Our simulated markets are populated rather than representative. A simulated consumer receives a taste multiplier for each variety and chooses the exact sparse demand \eqref{eq:sparse_share} at those tastes.\footnote{Each taste multiplier is drawn from a lognormal distribution with mean one and standard deviation \(0.30\) in logs. The unit mean design allows heterogeneity to add noise around the structural tastes without shifting them.} She then equalizes marginal rates of substitution at perceived prices while exhausting her true budget. The market share of a variety is the average of these individual shares across the consumers.\footnote{Exhausted budgets force every period's shares to sum to one. We check this construction period by period in our simulations.} Finally, we double-difference against the reference variety and assemble the regression variables required by the Feenstra method.\footnote{We make the panels balanced to keep every moment of the estimator traceable. The panel-growth requirement of Assumption \ref{ass:reg}(iii) is exercised directly by the \(T\) sweeps below.}

The first question we consider is how far the populated market sits from the representative consumer benchmark. Setting the dispersion to zero collapses every consumer onto the same sparse demand \eqref{eq:sparse_share}. This procedure produces the representative consumer of the model as a natural benchmark. Figure \ref{fig:het_conv} measures the distance in market shares on a fixed cross-section by aggregating between ten and five thousand consumers and averaging one hundred draws at each size. The maximum deviation across varieties falls with the square root of the number of consumers and then converges to a floor value.\footnote{This floor value is the aggregation gap between the average of a nonlinear share over taste draws and the share at the average taste, and it rises with attention because expenditure concentrates. At five thousand consumers, the deviation is at most \(0.005\) of expenditure at every attention weight, less than half of one percentage point of share.}

\begin{figure}[h]
\centering
\includegraphics[width=0.75\textwidth]{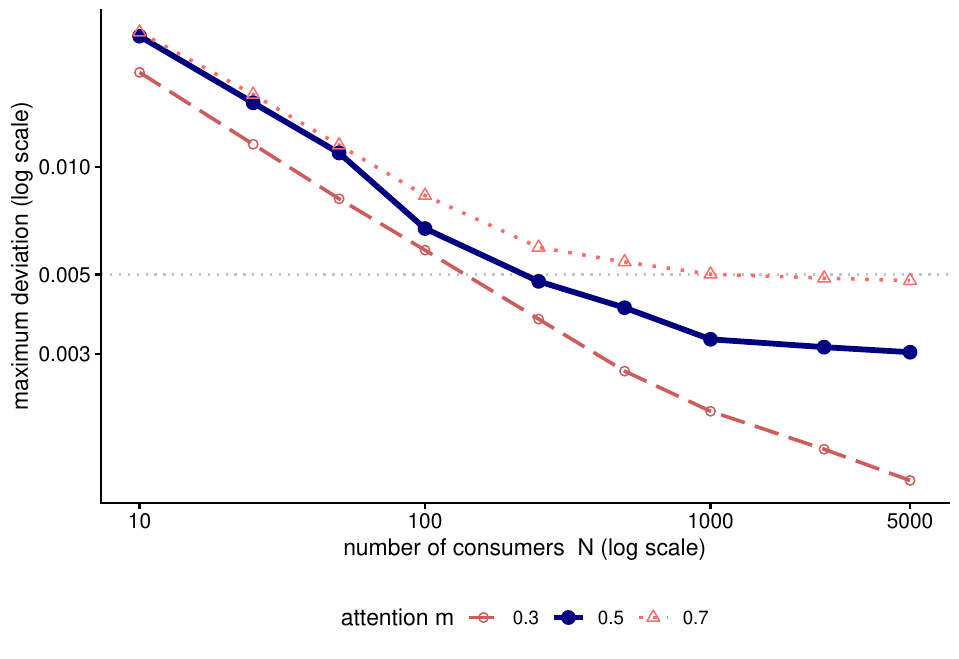}
\caption{Aggregation of heterogeneous shares to the representative agent. Maximum deviation between the market share of \(N\) consumers and the representative-agent share \eqref{eq:sparse_share}, on the baseline cross-section, averaged over one hundred draws, with log scales on both axes. The decline follows the square-root law until it meets the aggregation gap, which rises with attention as expenditure concentrates. At \(N=5{,}000\) the deviation is at most \(0.005\) at every attention weight.}
\label{fig:het_conv}
\end{figure}

The convergence figure (Figure\ref{fig:het_conv}) certifies the populated market shares in levels. The next relevant binding question is whether the exact demand equation survives aggregation in changes that are employed by the \citet{feenstra1994new} estimator. Figure \ref{fig:EH12} illustrates that on the long panel, it does.

\begin{figure}[h]
  \centering
  \includegraphics[width=0.98\textwidth]{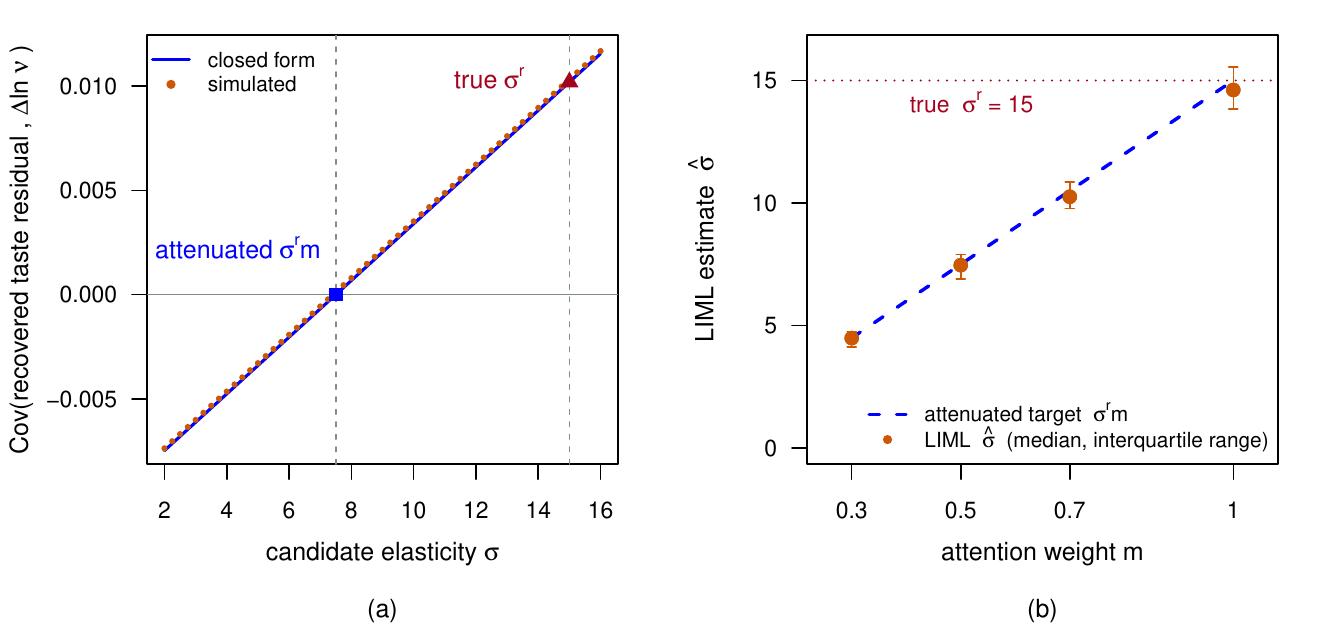}
  \caption{This pair of figures illustrates Proposition \ref{prop:main} using the simulated heterogeneous consumer market data. In Panel (a), the identifying moment is a function of the candidate elasticity at the focal attention weight \(m=0.5\). We plot simulated values (dots) and the closed form (line). The identifying moment is zero at the attenuated \(\sigma^r m = 7.5\) and positive at the true \(\sigma^r = 15\). In Panel (b), we plot the LIML estimates across the attention grid, displaying medians with interquartile ranges over one hundred fifty replications per attention weight at \(T = 120\). The median marks land on the attenuated line \(\sigma^r m\), reaching the true \(\sigma^r\) only at full attention (i.e., \(m=1\)).}
  \label{fig:EH12}
\end{figure}

\subsection{The Moment Fails at the Truth} 
Panel (a) of Figure \ref{fig:EH12} plots the identifying moment against the candidate elasticity on one long populated panel with six hundred periods of the market described above at the focal attention weight \(m = 0.5\). In this figure, each dot denotes a sample covariance at the corresponding candidate \(\sigma\) between the recovered residual \eqref{eq:recovered_resid} and the relative cost shock, computed by stacking the fourteen non-reference varieties over all periods. The line is the closed form \eqref{eq:moment}, derived for the representative consumer and evaluated at the panel's own cost-shock variance. 

In Figure \ref{fig:EH12} Panel (a), the dots never leave the line, the largest gap over the whole grid being one percent of the moment at the truth. So the line and the dots measure the same object. The moment crosses zero at the attenuated value \(\sigma^r m = 7.5\) and becomes positive at the true \(\sigma^r = 15\). A market of five thousand heterogeneous consumers, therefore, presents the estimator with the same orthogonality pattern one consumer implies, satisfied at the attenuated value but violated at the truth.  

Figure \ref{fig:EH12} Panel (a) is then the sample counterpart of Proposition \ref{prop:main}(i), computed on populated data. The violation at the truth exists because the residual the truth implies carries the term \(\sigma^r(1-m)\dln p_{vt}\) while prices respond to cost shocks through the market. Hence, the residual moves together with the same shock the moment requires it to ignore. This failure in the identifying moment sits in the joint distribution of prices and shares rather than in the construction of the estimator, and therefore procedures built from those observed prices and shares within the market are unlikely to fix it.

Panel (b) of Figure \ref{fig:EH12} takes Proposition \ref{prop:main}(ii) to the data, running the full estimator across the grid \(m \in \{0.3, 0.5, 0.7, 1\}\), with one hundred fifty markets per cell at \(T = 120\). At full attention, the estimator works, returning a median of \(14.6\) against the true \(15\).\footnote{The representative-agent benchmark returns \(14.67\), so the shortfall splits into \(0.33\) of the finite-sample bias of \citet{soderbery2015estimating} and an aggregation tilt of \(0.07\), the second-order wedge of the consumer sweep. The benchmarks at the remaining weights are \(4.48\), \(7.51\), and \(10.31\).} Away from full attention, the medians land at \(4.5\), \(7.5\), and \(10.3\) against targets of \(4.5\), \(7.5\), and \(10.5\), with the benchmarks trace mostly to the same finite-sample bias, while interquartile ranges never approach \(\sigma^r\). Panel (b) of Figure \ref{fig:EH12} therefore shows that a precise and well-behaved estimator of \(\sigma^r m\) lands on the attenuated line and it recovers \(\sigma^r\) only when attention is complete (i.e., \(m=1\)).

In Panel (b) of Figure \ref{fig:EH12}, the simulation runs for one hundred twenty periods. A practitioner could therefore attribute everything in that panel to weak instruments and short panels, which corresponds to the finite-sample bias interpretation of \citet{soderbery2015estimating}. One way to distinguish this objection from the failure of the identifying moment is to vary the amount of data and observe which concerns disappear. Figure \ref{fig:EH3} does that.

\begin{figure}[h]
  \centering
  \includegraphics[width=0.98\textwidth]{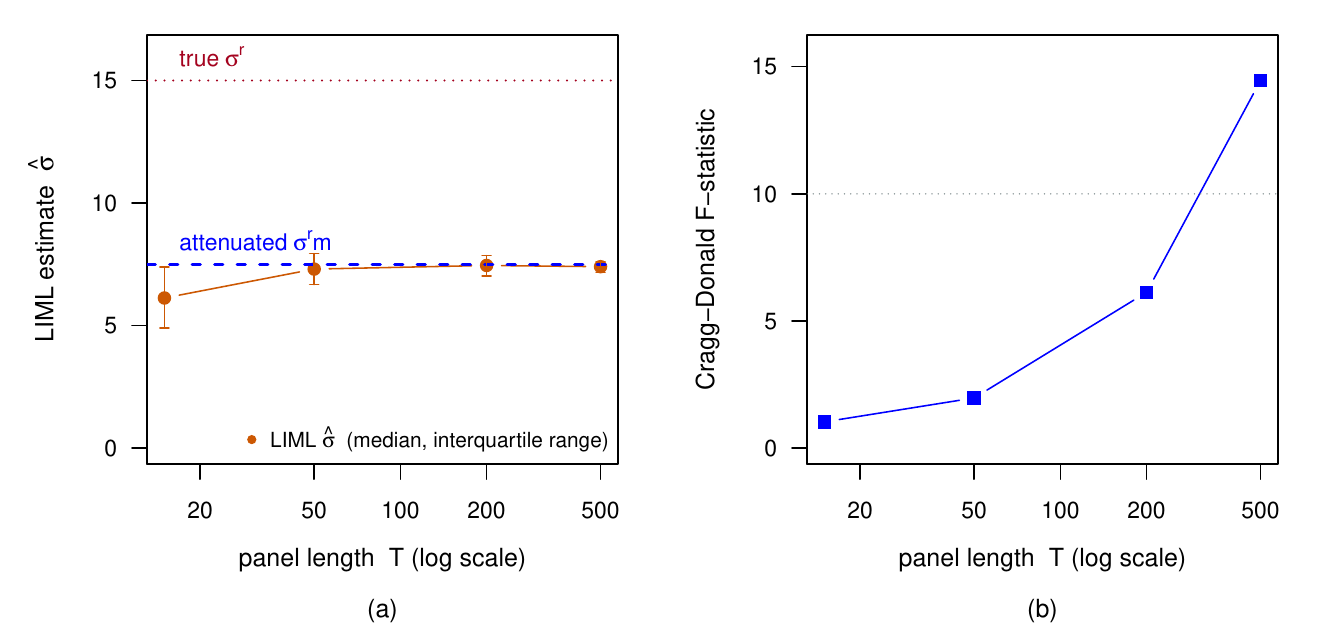}
  \caption{This pair of figures carry out a data-panel length sweep using the simulated heterogeneous consumer market data with the focal attention weight fixed at \(m = 0.5\) and the attenuated value being \(\sigma^r m = 7.5\). Panel (a) plots the LIML estimates, medians, and interquartile ranges across one hundred fifty replications per panel length, against the true \(\sigma^r = 15\) (dotted) and the attenuated \(\sigma^r m\) (dashed). The estimate settles at the attenuated value, and expanding the simulation panel narrows the interquartile range from \(2.5\) to \(0.45\) without moving the median toward the truth. Panel (b) plots medians of the joint Cragg-Donald F-statistic for the two endogenous regressors across the same set of replications. The median rises from \(1\) to \(14.5\) through the conventional rule-of-thumb threshold of ten (dotted). Longer panels sharpen the estimate of \(\sigma^r m\) and strengthen the instruments. But the distance to \(\sigma^r\) never shrinks.}
  \label{fig:EH3}
\end{figure}

\subsection{More Periods Sharpen the Wrong Number}

Figure \ref{fig:EH3} fixes the focal attention \(m=0.5\), so the attenuated value is \(\sigma^r m = 7.5\), and sweeps the simulated data panel length from fifteen periods to five hundred. Based on Assumption \ref{ass:reg}(iii), this sweep separates three explanations for a low estimate, too few periods, weak instruments, and identification, by curing the first two through expanding the data panel.

Panel (a) of Figure \ref{fig:EH3} shows that the median of the estimates sits at \(6.1\) at fifteen periods, reflecting the finite-sample bias that \citet{soderbery2015estimating} studies. This shortfall shrinks to a few percent by fifty periods and is gone by two hundred where the medians settling just below \(7.5\) because of the aggregation gap. Accordingly, the interquartile range narrows from \(2.5\) at the shortest panel to \(0.45\) at the longest while the median stays put. Therefore, a five hundred-period panel tightly identifies the attenuated estimand of \(7.5\), but leaves the true structural elasticity of \(15\) entirely unrecovered.

Panel (b) of Figure \ref{fig:EH3} removes the second explanation, weak instruments. The joint Cragg-Donald F-statistic for the two endogenous regressors rises from \(1\) to \(14.5\), through the conventional rule-of-thumb threshold of ten, as more periods resolve the cross-variety heteroskedasticity the variety instruments use.\footnote{The regression is Feenstra's product-moment equation across varieties, in which the squared change in the double-differenced price is fitted on the squared share change and on the price-share cross product, all relative to a reference variety \(k\). The variety indicators instrument these two endogenous regressors, and they are relevant only through the cross-variety spread in the taste-to-cost variance ratio.} As diagnostics improve with the panel length, the estimate never moves toward \(\sigma^r\) because the attenuation here is a property of the population moment, not of the sampling distribution.

Simulations conducted in this section confirm that the standard estimator converges to the attenuated value when consumers exhibit bounded attention. The gap between the returned pseudo-true parameter, in the sense of \citet{andrews2026true}, and the true structural elasticity persists in the population limit. Therefore, when not being properly accounted for, bounded attention results in an identification failure that finite-sample and weak-instrument corrections cannot resolve.

\section{Discussion and Conclusions}
\label{sec:conclusion}

Once attention is bounded, prices and quantities alone are insufficient to point-identify the structural elasticity. The standard supply-side approach yields a composite parameter that conflates preferences and attention. Structural elasticity is set-identified, with the composite serving as its lower bound, and the two parameters coincide only under full attention. This represents a population identification failure, which is distinct from the finite-sample and weak-instrument issues previously examined in the literature. Any combination of structural elasticity and attention weight that produces the same composite parameter fits the observed data equally well, although these combinations imply different welfare outcomes.

Therefore, recovering the structural elasticity requires information beyond price and quantity data. One approach involves specifying a model that rigorously characterizes how attention is allocated. Alternatively, an instrumental variable that shifts attention without affecting preferences can be employed, and recent policy interventions have begun to provide such instruments. 

For example, the French display rule requires supermarkets larger than 400 square meters to post the unit price change of a downsized product on the shelf for two months. Smaller stores and online retailers are exempt. If this notice alters shopper perception without changing the product, it satisfies the exclusion restriction required for an attention instrument. The exemptions allow the same product variety to be sold during the same period under two different salience regimes. Within the model, the display increases the attention weight for the treated product and store cells. In the limit of full attention, the price response in treated stores identifies the structural elasticity. The ratio of the response in exempt stores to that in treated stores identifies the attention weight. This decomposition separates the composite parameter into its constituent factors. This limit corresponds to the frictionless benchmark described by \citet{chetty2012bounds}, which is achieved through regulation. The estimation of these parameters is left for future research.

\appendix
\refstepcounter{section} 
\section*{Appendix. Proofs} 
\label{app:proofs}

Throughout, all changes are first differences relative to a reference variety \(k\) in the nest, so the common nest-time aggregates cancel and \(\eta_t\) drops out of the residuals.

\subsection{\texorpdfstring{Proof of Lemma \ref{lem:obs_equiv}}{Proof of Lemma 1}}

\begin{proof}
Start from the effective taste weight \(\tilde\phi_v = \phi_v (p_v^s/p_v)^{-\sigma^r} = \phi_v\, p_v^{\sigma^r}(p_v^s)^{-\sigma^r}\) in \eqref{eq:eff_taste} and insert it into the rational share \eqref{eq:rational_share} with \(\tilde\phi\) in place of \(\phi\),
\[
  \frac{\tilde\phi_v\, p_v^{1-\sigma^r}}{\sum_j \tilde\phi_j\, p_j^{1-\sigma^r}}
  = \frac{\phi_v\, p_v^{\sigma^r}(p_v^s)^{-\sigma^r}\, p_v^{1-\sigma^r}}{\sum_j \phi_j\, p_j^{\sigma^r}(p_j^s)^{-\sigma^r}\, p_j^{1-\sigma^r}}.
\]
In each term \(p_v^{\sigma^r}\cdot p_v^{1-\sigma^r} = p_v\), so the right-hand side equals \(p_v\phi_v(p_v^s)^{-\sigma^r}/W = s_v^s\), the sparse share \eqref{eq:sparse_share}. The bounded-attention shares thus coincide with rational shares under taste weights \(\tilde\phi_v\). An econometrician who fits \eqref{eq:rational_share} to \((s_v^s, p_v)\) recovers the taste weights that rationalize the shares, namely \(\tilde\phi_v\), and these differ from \(\phi_v\) by the factor \(R_v^{-\sigma^r}\) whenever \(m<1\) and \(p_v \neq \bar p\).
\end{proof}

\subsection{\texorpdfstring{Proof of the exact demand equation \eqref{eq:demand_exact}}{Proof of the exact demand equation}}

\begin{proof}
By Lemma \ref{lem:obs_equiv}, taking logs of the share gives \(\ln s_{vt}^s = \ln\tilde\phi_{vt} + (1-\sigma^r)\ln p_{vt} - \ln[\sum_j \tilde\phi_{jt} p_{jt}^{1-\sigma^r}]\).
By \eqref{eq:eff_taste}, \(\ln\tilde\phi_{vt} = \ln\phi_{vt} - \sigma^r \ln R_{vt}\).
The bracketed term depends only on the nest and the period, so first-differencing it and absorbing it into a period effect \(\eta_t\) gives \(\dln s_{vt}^s = \eta_t + \dln\phi_{vt} - \sigma^r\dln R_{vt} + (1-\sigma^r)\dln p_{vt}\).
From \eqref{eq:R}, \(\ln R_{vt} = (1-m)(\ln\bar p - \ln p_{vt})\), and the anchor is fixed, so \(\dln R_{vt} = -(1-m)\dln p_{vt}\).
If the anchor moves but is common across varieties, its change is part of the period effect and drops from relative changes.

Substituting,
\[
  \dln s_{vt}^s = \eta_t + \dln\phi_{vt} + \sigma^r(1-m)\dln p_{vt} + (1-\sigma^r)\dln p_{vt}
  = \eta_t + \dln\phi_{vt} + (1-\sigma^r m)\dln p_{vt}.
\]
No term is dropped and no expansion is taken, so the equation is exact, the coefficient on \(\dln p_{vt}\) is \(1-\sigma^r m\), and the residual is \(\dln\phi_{vt}\), which establishes \eqref{eq:apparent}.
\end{proof}

\subsection{\texorpdfstring{Proof of the reduced form \eqref{eq:reduced}}{Proof of the reduced form}}

\begin{proof}
In relative terms the pricing equation \eqref{eq:supply} gives \(\dln p_{vt} = \dln\nu_{vt} + \omega\dln x_{vt}\) with \(\dln x_{vt} = \dln s_{vt}^s - \dln p_{vt}\), so \(\dln p_{vt}(1+\omega) = \omega\dln s_{vt}^s + \dln\nu_{vt}\).
Substitute the demand equation \eqref{eq:demand_dbdiff}, written relative to \(k\), and collect the price terms,
\[
  \dln p_{vt}\big[1+\omega - \omega(1-\sigma^r m)\big] = \omega\,\dln\phi_{vt} + \dln\nu_{vt}.
\]
Since \(1+\omega-\omega(1-\sigma^r m) = 1+\omega\sigma^r m = D\), this is \eqref{eq:reduced}, with \(D>0\).
\end{proof}

\subsection{\texorpdfstring{Proof of the moment \eqref{eq:moment}}{Proof of the moment}}

\begin{proof}
Insert the demand equation \eqref{eq:demand_dbdiff} into the recovered residual \eqref{eq:recovered_resid}, relative to \(k\),
\[
  \dln\tilde\phi_{vt}(\sigma)
  = \dln\phi_{vt} + (1-\sigma^r m)\dln p_{vt} + (\sigma-1)\dln p_{vt}
  = \dln\phi_{vt} + (\sigma - \sigma^r m)\dln p_{vt}.
\]
Setting \(\sigma = \sigma^r\) gives
\[
  \dln\tilde\phi_{vt}(\sigma^r) \;=\; \dln\phi_{vt} + \sigma^r(1-m)\,\dln p_{vt} \;=\; \dln\tilde\phi_{vt},
\]
the change in the effective taste of Lemma \ref{lem:obs_equiv}, as claimed in the text.
Take the covariance with \(\dln\nu_{vt}\).
The term \(\Cov(\dln\phi_{vt}, \dln\nu_{vt})\) is zero by Assumption \ref{ass:orth}, and the reduced form \eqref{eq:reduced} gives
\[
  \Cov(\dln p_{vt}, \dln\nu_{vt}) \;=\; \frac{\Var(\dln\nu_{vt})}{D},
\]
again using Assumption \ref{ass:orth}.
Combining the two pieces gives 
\[\Cov(\dln\tilde\phi_{vt}(\sigma), \dln\nu_{vt}) = (\sigma - \sigma^r m)\Var(\dln\nu_{vt})/D.\]
At \(\sigma = \sigma^r m\) this is zero, and at \(\sigma = \sigma^r\) it equals \(\sigma^r(1-m)\Var(\dln\nu_{vt})/D\), which is strictly positive whenever \(m<1\).
\end{proof}

\subsection{\texorpdfstring{Proof of Proposition \ref{prop:main}}{Proof of Proposition 1}}

\begin{proof}

Consider a rational CES market with elasticity \(\sigma^\dagger = \sigma^r m\), the same inverse supply elasticity \(\omega\), taste shock \(\dln\phi_{vt}\), and cost shock \(\dln\nu_{vt}\), which are uncorrelated by Assumption \ref{ass:orth}.

Its demand equation is \(\dln s_{vt} = \eta_t + \dln\phi_{vt} + (1-\sigma^\dagger)\dln p_{vt}\) and its pricing equation is \eqref{eq:supply} with \(\sigma^\dagger\) in place of \(\sigma^r\).
Solving as in the reduced-form proof, its equilibrium price change is \((\omega\dln\phi_{vt} + \dln\nu_{vt})/(1+\omega\sigma^\dagger)\), and \(1+\omega\sigma^\dagger = 1+\omega\sigma^r m = D\). This price change is identical to the bounded-attention reduced form \eqref{eq:reduced}, and the rational share change \(\eta_t + \dln\phi_{vt} + (1-\sigma^r m)\dln p_{vt}\) is identical to the bounded-attention demand equation \eqref{eq:demand_exact}. The two economies therefore generate the same joint distribution of \((\dln p_{vt}, \dln s_{vt})\) period by period, with the same shocks, which proves the observational-equivalence claim and statement (i), as the moment \eqref{eq:moment} is then the moment of the rational economy with elasticity \(\sigma^r m\).

For statement (ii), the Feenstra and Broda-Weinstein estimator, in its product-moment and limited-information maximum-likelihood forms, is a measurable function of the joint distribution of \((\dln p_{vt}, \dln s_{vt})\) and the variety indicators. The constructed rational economy is correctly specified and satisfies Assumption \ref{ass:reg}(i) to (iii). The mapping from the slopes to the elasticity is well posed, since for \(\omega > 0\) Feenstra's quadratic in \(\sigma - 1\) has a positive leading coefficient and constant term \(-1\), so its two roots are real with opposite signs and exactly one satisfies \(\sigma > 1\). Assumption \ref{ass:reg}(iv) therefore delivers consistency for \((\sigma^\dagger, \omega) = (\sigma^r m, \omega)\). Because the bounded-attention data have the same joint distribution, the estimator applied to them has the same probability limit, \(\plim\hat{\sigma} = \sigma^r m\) and \(\plim\hat\omega = \omega\). The estimator cannot return \(\sigma^r\), because the data it sees are generated, in distribution, by a market whose elasticity is \(\sigma^r m\).
\end{proof}

%%%

\subsection{The additive perception rule \label{app:level}}

The exactness of the constant \(\sigma^r m\) relies on the log form of the perception rule \eqref{eq:perceived}, while the identification failure does not. We take \(\dln\) over time alone with the period effect explicit, the anchor stays fixed, and the pricing equation and Assumption \ref{ass:orth} carry over from the main text.

Let perception be additive in levels, the \citet{gabaix2014sparsity} original form,
\begin{equation}\label{eq:levels_rule}
  p_v^s \;=\; m\,p_v + (1-m)\,\bar p,
  \qquad
  R_{vt} \;=\; \frac{p^s_{vt}}{p_{vt}} \;=\; m + (1-m)\,\frac{\bar p}{p_{vt}}.
\end{equation}
Because the effective taste \(\tilde\phi_{vt} = \phi_{vt}\,R_{vt}^{-\sigma^r}\) came from the budget adjustment alone and never used the form of \(R_{vt}\), the observational equivalence of Lemma \ref{lem:obs_equiv} is unchanged. Taking logs of the equivalent share and differencing over time, with the nest aggregate absorbed into the period effect, gives the exact demand equation
\begin{equation}\label{eq:levels_exact}
  \dln s^{s}_{vt} \;=\; \eta_t + \dln\phi_{vt} + (1-\sigma^r)\,\dln p_{vt} - \sigma^r\,\dln R_{vt},
\end{equation}
in which the price enters twice, once directly and once through the nonlinear wedge.

%%%%
Differentiating the rule \eqref{eq:levels_rule} directly gives \(dp^s_{vt}/dp_{vt} = m\), so the level pass-through is the constant \(m\) everywhere. We take the following steps to show that the rule passes the varying fraction \(m/R_{vt}\) through in logs, which is what the demand system uses. 

First, rearranging the closed form of the wedge in \eqref{eq:levels_rule} gives the identity \((1-m)\,\bar p/p_{vt} = R_{vt} - m\), so \(R_{vt} > m\) and the ratio \(m/R_{vt}\) lies strictly between zero and one whenever \(m < 1\). Next, differentiating \(R_{vt} = m + (1-m)\,\bar p/p_{vt}\) in the price gives \(dR_{vt}/dp_{vt} = -(1-m)\,\bar p/p_{vt}^{2}\), and the chain rule \(d\ln x/d\ln p = (p/x)\,dx/dp\) converts this to
\begin{equation}\label{eq:levels_dlnR}
  \frac{d\ln R_{vt}}{d\ln p_{vt}} \;=\; -\,\frac{(1-m)\,\bar p/p_{vt}}{R_{vt}}.
\end{equation}
Substituting the rearrangement from \eqref{eq:levels_rule} into the numerator of \eqref{eq:levels_dlnR} yields \(d\ln R_{vt}/d\ln p_{vt} = -(R_{vt} - m)/R_{vt} = -(1 - m/R_{vt})\), which is negative whenever \(m < 1\), so the wedge falls as the price rises. In the meantime, because \(p^s_{vt} = R_{vt}\,p_{vt}\) by the definition of the wedge, \(\ln p^s_{vt} = \ln p_{vt} + \ln R_{vt}\), and differentiating gives \(d\ln p^s_{vt}/d\ln p_{vt} = 1 - (1 - m/R_{vt}) = m/R_{vt}\). Then, we substitute the closed form of \(R_{vt}\) to assemble the two derivatives into
\begin{equation}\label{eq:levels_elastic}
  \frac{d\ln R_{vt}}{d\ln p_{vt}} \;=\; -\Big(1 - \frac{m}{R_{vt}}\Big),
  \qquad
  \frac{d\ln p^s_{vt}}{d\ln p_{vt}} \;=\; \frac{m}{R_{vt}} \;=\; \frac{m\,p_{vt}}{m\,p_{vt} + (1-m)\,\bar p}.
\end{equation}

%%%%

Denote \(d\ln p^s_{vt}/d\ln p_{vt} = m/R_{vt}\) from \eqref{eq:levels_elastic} the log pass-through, the fraction of a log price change that enters the perceived log price. At the anchor, \(R_{vt} = 1\), the log pass-through is \(m\), and the demand slope below is \(1 - \sigma^r m\), so the additive and log rules coincide there. The coincidence is first order because around \(\bar p\) the second-order coefficient of the perceived price is \(m(m-1)/(2\bar p)\) under the log rule and zero under the additive rule. Therefore, the two data generating processes are second order apart for the small price changes that dominate high-frequency data. Away from the anchor, the log pass-through rises with the price, as its derivative in the log price is \((m/R_{vt})(1 - m/R_{vt}) > 0\).

The split around the anchor follows from two rearrangements of \eqref{eq:levels_rule}. In levels, \(p^s_{vt} - p_{vt} = (1-m)(\bar p - p_{vt})\) and \(p^s_{vt} - \bar p = m(p_{vt} - \bar p)\), so the perceived price always lies between the posted price and the anchor, and the misperception is the same fraction \(1-m\) of the gap on both sides. In ratios, denote \(1 - R_{vt} = (1-m)(1 - \bar p/p_{vt})\). Consider the sign of \(1 - R_{vt}\), whenever \(m < 1\), the wedge exceeds one if the price is below the anchor and falls short of one above it. Demand therefore looks least price sensitive where the discount is deepest, and the anchor deepens the misperception of discounts while it cushions the perception of increases.

Linearizing \(\dln R_{vt}\) through \eqref{eq:levels_elastic}, with the wedge treated as constant within each differencing window, gives the local demand equation
\begin{equation}\label{eq:levels_demand}
  \dln s^{s}_{vt} \;=\; \eta_t + \dln\phi_{vt} + \big(1 - \sigma^r m_{vt}\big)\,\dln p_{vt},
  \qquad
  m_{vt} \;\equiv\; \frac{m}{R_{vt}},
\end{equation}
the one approximation in this derivation, exact to first order at the anchor, where it returns the slope \(1 - \sigma^r m\) of the main text. The apparent elasticity \(\sigma^r m_{vt}\) varies with the price and lies strictly below \(\sigma^r\) whenever \(m < 1\).
 
The estimation follows in four steps. First, substituting \eqref{eq:levels_demand} into the recovered residual \eqref{eq:recovered_resid} at a candidate \(\sigma\), net of the period effect, the direct and inverted price terms combine into \(\dln\tilde\phi_{vt}(\sigma) = \dln\phi_{vt} + (\sigma - \sigma^r m_{vt})\,\dln p_{vt}\), so the friction survives only in the local coefficient. Second, take the covariance of this residual with the cost shock itself. The taste term drops by Assumption \ref{ass:orth}, while the local weight stays inside the expectation because it is a function of the window price level, so
\begin{equation}\label{eq:levels_moment}
  E\!\big[\dln\tilde\phi_{vt}(\sigma)\,\dln\nu_{vt}\big]
  \;=\; \sigma\,E\!\big[\dln p_{vt}\,\dln\nu_{vt}\big]
  \;-\; \sigma^r\,E\!\big[m_{vt}\,\dln p_{vt}\,\dln\nu_{vt}\big].
\end{equation}
Third, setting \eqref{eq:levels_moment} to zero yields
\begin{equation}\label{eq:levels_target}
  \sigma^{\circ} \;=\; \sigma^r\,
  \frac{E\!\big[m_{vt}\,\dln p_{vt}\,\dln\nu_{vt}\big]}{E\!\big[\dln p_{vt}\,\dln\nu_{vt}\big]}
  \;\equiv\; \sigma^r\,E_w\!\Big[\frac{m}{R_{vt}}\Big],
\end{equation}
where \(E_w\) is the expectation that tilts each observation by its price and cost-shock product. Fourth, we establish that the tilting weights are nonnegative once the shocks are integrated out, though they can take either sign in a single realization. Conditional on the window state, the nest-time aggregates and the base price that fixes \(m_{vt}\), the shocks stay mean zero and mutually orthogonal, so the reduced form \eqref{eq:reduced} holds at the local weight and the conditional covariance of the price change with the cost shock is \(\Var(\dln\nu_{vt})/D_{vt}\) with \(D_{vt} = 1 + \omega\,\sigma^r m_{vt}\). By the law of iterated expectations, both expectations in \eqref{eq:levels_target} average this conditional covariance, which delivers
\begin{equation}\label{eq:levels_convex}
  \sigma^{\circ} \;=\; \sigma^r\,\frac{E\big[m_{vt}\,\lambda_{vt}\big]}{E\big[\lambda_{vt}\big]},
  \qquad
  \lambda_{vt} \;\equiv\; \frac{\Var(\dln\nu_{vt})}{D_{vt}} \;>\; 0,
\end{equation}
a convex combination of the local values with weights \(\lambda_{vt}/E[\lambda_{vt}]\) that are nonnegative and average to one. Viewed as a function of the candidate \(\sigma\), the slope of the moment \eqref{eq:levels_moment} equals \(E[\lambda_{vt}] > 0\), so the zero is unique. The gap below the structural elasticity is \(\sigma^r - \sigma^{\circ} = \sigma^r\,E_w[1 - m_{vt}]\), strictly positive whenever \(m < 1\). The target collapses to \(\sigma^r m\) when every price sits at the anchor and to \(\sigma^r\) under full attention. 

The estimator observes no cost shock and weights the local equations nonlinearly, so its limit is a different nonnegative-weight average of the same local values. Every such average inherits the bound, so the estimator returns a value strictly below the structural elasticity, a weighted average of the local attenuations rather than the single constant \(\sigma^r m\).

%%%

\bibliography{behavioralCES}

\end{document}